\def\margin{2.45cm}
\title{Another Disjoint Compression Algorithm for OCT}
\author{R. Krithika}
\author{N. S. Narayanaswamy}
\affil{Department of Computer Science and Engineering, \authorcr Indian Institute of Technology Madras, India.\authorcr
\{krithika $\mid$ swamy\}@cse.iitm.ac.in}
\date{}
\theoremstyle{plain}
\newtheorem{theorem}{Theorem}
\newtheorem{lemma}[theorem]{Lemma}
\newtheorem{cor}[theorem]{Corollary}
\newtheorem{obs}[theorem]{Observation}
\begin{document}
\maketitle
\begin{abstract}
We describe an elegant $O^*(2^k)$ algorithm for the disjoint compression problem for Odd Cycle Transversal based on a reduction to Above Guarantee Vertex Cover. We believe that this algorithm refines the understanding of the Odd Cycle Transversal algorithm by Reed, Smith and Vetta \cite{oct-fpt}.\\\\
{\bf Keywords: }parameterized complexity, odd cycle transversal, disjoint compression, above guarantee vertex cover.
\end{abstract}
\section{Introduction}
Given an undirected graph, an odd cycle transversal ({\footnotesize OCT}) is a subset of vertices whose deletion makes the resulting graph bipartite. The natural optimization problem is to find a minimum cardinality such set and the corresponding decision problem is known to be NP-complete \cite{garey}. We revisit the parameterized version of this problem that is known to be fixed-parameter tractable (FPT) with respect to the solution size as the parameter \cite{oct-fpt}.
\begin{center}
\noindent \fbox{
  \parbox{11.7cm}{
\noindent {\footnotesize{ODD CYCLE TRANSVERSAL}}\\
\textbf{Input: }A graph $G$ and a non-negative integer $k$\\
\textbf{Parameter: }$k$\\
\textbf{Question:} Does there exist $S \subseteq V(G)$, $|S| \leq k$ such that $G - S$ is bipartite?
}
}
\end{center} 
\indent Parameterized algorithm analysis is a multi-dimensional analysis of the running time as a function of the input size and parameter(s). A decision problem with input size $n$ and a parameter $k$ is said to be FPT if it admits an algorithm with runtime $f(k)n^{O(1)}$. Such an algorithm is referred to as an FPT algorithm for the problem and the corresponding runtime is called as an FPT runtime. The running time $f(k)n^{O(1)}$ of an FPT algorithm is generally denoted as $O^*(f(k))$ by suppressing the polynomial terms.\\ 
\indent The fixed-parameter tractability of {\footnotesize OCT} was first shown in \cite{oct-fpt} by an $O^*(3^k)$ algorithm. This was obtained through a technique which is now called `iterative compression', that led to the design of FPT algorithms for many other problems. This technique typically works for minimization problems parameterized by the solution size. The idea is to begin with a solution of size $k+1$ and attempt to compress it (in FPT time) to a solution of size $k$. This is the compression step. To get the $k+1$ sized solution to start with, an algorithm using iterative compression technique typically starts with a $k+1$ sized solution for an induced subgraph on (any) $k+2$ vertices and tries to compress the solution to one of size $k$. If it succeeds, it iteratively adds a new vertex both to the graph and to the solution and continues the compression step in the (larger) new graph. This process continues until we reach the original graph or we get a no answer for any intermediate induced subgraph. If a $k$ sized solution exists for $G$, we are guaranteed to find it within $n-k$ compressions. \\
\indent Typically, algorithms employing iterative compression (including the one in \cite{oct-fpt}), the interaction of the known $k+1$ sized solution $S$ with a smaller solution (if one exists) is exploited to obtain a $k$ sized solution. So, part of the FPT time incurred in the compression step is generally due to the subset enumeration of $S$ as possible choices for the intersection of $S$ with the $k$ sized solution that we seek for. For each such subset $U$, the compression subtask is essentially to find a solution of $G-U$ (for which $S \setminus U$ is a solution) that is disjoint from $S \setminus U$. This task is called as the {\em disjoint compression} step. The disjoint compression problem for {\footnotesize OCT} is defined as follows.
\begin{center} 
\noindent \fbox{
  \parbox{11.7cm}{
\noindent {\footnotesize{OCT-DISJOINT-COMPRESSION}}\\
\textbf{Input: }A graph $G$ and an {\footnotesize OCT} $T$ of $G$ such that $G[T]$ is bipartite\\
\textbf{Question:} Does $G$ have an {\footnotesize OCT} $T'$ of size at most $|T|-1$ such that $T \cap T'=\emptyset$?
}
}
\end{center} 
\noindent {\footnotesize OCT-DISJOINT-COMPRESSION} is known to be NP-complete \cite{disjoint-comp}. The {\footnotesize OCT} algorithm in \cite{oct-fpt} solves this problem in $O^*(2^k)$ time by showing that the smaller sized disjoint solution (if one exists) can be obtained as a separator in a graph among a set of auxiliary graphs, each of which can be constructed in polynomial time \cite{oct-fpt}. An alternate $O^*(2^k)$ algorithm for {\footnotesize OCT-DISJOINT-COMPRESSION} is presented in \cite{oct-color} which led to another $O^*(3^k)$ algorithm for {\footnotesize OCT}. A third $O^*(3^k)$ algorithm for {\footnotesize OCT} is described in \cite{oct-separator}. This algorithm solves {\footnotesize OCT} by solving $O^*(3^k)$ instances of a variant of disjoint compression rather than solving instances of {\footnotesize OCT-DISJOINT-COMPRESSION}. All these known compression based algorithms essentially transform the problem of finding an {\footnotesize OCT} to a vertex separator question in FPT time. After nearly a decade, the $O^*(3^k)$ bound for {\footnotesize OCT} was improved to $O^*(2.3146^k)$ \cite{fpt-lp}. This runtime was achieved by a branching algorithm that employs linear programming techniques and a reduction from {\footnotesize OCT} to the well-known Above Guarantee Vertex Cover ({\footnotesize AGVC}) problem. 
\begin{center} 
\noindent \fbox{
  \parbox{11.7cm}{
\noindent {\footnotesize{ABOVE GUARANTEE VERTEX COVER}}\\
\textbf{Input: }A graph $G$, a maximum matching $M$ and a non-negative integer $k$\\
\textbf{Parameter: }$k$\\
\textbf{Question:} Does $G$ have a vertex cover of size at most $|M|+k$?
}
}
\end{center} 
\noindent The algorithm in \cite{fpt-lp} exploits the structure of the vertex cover polytope and the parameterized equivalence between {\footnotesize OCT} and {\footnotesize AGVC} to achieve the improved runtime. \\
\indent In this work, we combine the iterative compression technique and the reduction from {\footnotesize OCT} to {\footnotesize AGVC} to describe a conceptually simpler $O^*(2^k)$ algorithm for {\footnotesize OCT-DISJOINT-COMPRESSION}. As opposed to the vertex separator subroutine that the known compression algorithms for {\footnotesize OCT} employ, we transform (in FPT time) the {\footnotesize OCT} question in the compression step to the vertex cover problem in (multiple) bipartite graphs. As a consequence, we obtain yet another $O^*(3^k)$ algorithm for {\footnotesize OCT}. 
\section{OCT via AGVC}  
We describe an $O^*(2^k)$ algorithm for {\footnotesize OCT-DISJOINT-COMPRESSION} by transforming the {\footnotesize OCT} instance in the compression step to an {\footnotesize{AGVC}} instance. The transformation is described below.\\
{\bf OCT Reduces to AGVC \cite{fpt-lp}: }Given a graph $G$ on $n$ vertices, we construct the graph, denoted by $G^2$, on the vertex set $V_1 \cup V_2$ where $V_i=\{v_i \mid v \in V(G)\}$ for $i \in \{1,2\}$. The edge set of $G^2$ is $\{\{u_1,v_1\},\{u_2,v_2\} \mid \{u,v\} \in E(G)\} \cup \{\{v_1,v_2\} \mid v \in V(G)\}$. For a set $S$ of vertices of $G$, let $S_i$ denote the set $\{v_i \in V_i \mid v \in S\}$ of vertices in $G^2$. 
\begin{lemma}\cite{fpt-lp}
\label{reduc}
$G$ has an {\footnotesize OCT} $S$ of size $k$ if and only if $G^2$ has a vertex cover $X$ of size $n+k$. 
\end{lemma}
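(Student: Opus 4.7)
The plan is to prove both directions by exhibiting explicit constructions, exploiting the ``double cover'' structure of $G^2$: the $n$ matching edges $\{v_1,v_2\}$ force any vertex cover to have size at least $n$, and the two disjoint copies of $E(G)$ behave independently, which is what lets a bipartition of $G-S$ encode a vertex cover of $G^2$.

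For the forward direction, suppose $S$ is an OCT of $G$ of size $k$ and let $(A,B)$ be a bipartition of $G-S$. I will take
\[
  X \;=\; S_1 \cup S_2 \cup A_1 \cup B_2,
\]
which has size $2k + |A| + |B| = 2k + (n-k) = n+k$. I would then verify in turn that $X$ covers each of the three edge types: every matching edge $\{v_1,v_2\}$ is covered because $v$ lies in $S$, $A$, or $B$; every edge $\{u_1,v_1\}$ with both endpoints outside $S_1$ has at least one endpoint in $A_1$, since an edge of $G-S$ cannot lie inside $B$; and symmetrically for $\{u_2,v_2\}$ with $B_2$.

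For the reverse direction, given a vertex cover $X$ of $G^2$ of size $n+k$, the natural idea is to read off $S$ as the set of ``doubly covered'' vertices and $(A,B)$ as the vertices that appear only in the first, respectively only in the second, copy. Precisely, define
\[
  S=\{v : v_1,v_2 \in X\},\quad A=\{v: v_1 \in X, v_2 \notin X\},\quad B=\{v: v_1 \notin X, v_2 \in X\}.
\]
Because $X$ covers every matching edge $\{v_1,v_2\}$, the sets $S,A,B$ partition $V(G)$, and a direct count gives $|X|=2|S|+|A|+|B|=|S|+n$, hence $|S|=k$. To see that $S$ is an OCT it suffices to show $(A,B)$ bipartitions $G-S$: for any edge $\{u,v\}$ of $G-S$, if $u,v$ were both in $A$ then $\{u_2,v_2\}$ would be an uncovered edge of $G^2$, and symmetrically both in $B$ would leave $\{u_1,v_1\}$ uncovered.

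The two directions are structurally symmetric, and the arithmetic is forced once the partition $V(G)=S\sqcup A\sqcup B$ is extracted from $X$. The only place that requires a little care is checking, in the reverse direction, that the two copy-edges truly rule out monochromatic edges within $A$ or within $B$; this is the sole point where the independence of the two edge-copies in $G^2$ is used, and it is also the conceptual heart of the reduction. There is no real obstacle beyond keeping the bookkeeping of $S_1,S_2,A_1,B_2$ straight.
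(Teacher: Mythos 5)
Your proof is correct and is essentially the standard argument for this reduction; the paper itself does not reprove the lemma (it cites \cite{fpt-lp}), and the correspondence you use --- $S_1\cup S_2\cup A_1\cup B_2$ as the cover, equivalently $A_2\cup B_1$ as an independent set, and reading off $S$, $A$, $B$ from which copies of each vertex lie in $X$ --- is exactly the one recorded in Corollary~\ref{cor-reduc}. Nothing further is needed.
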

\begin{cor}
\label{cor-reduc}
\cite{fpt-lp}
For an {\footnotesize OCT} $S$ of $G$, if $P \uplus Q$ is a bipartition of $G - S$, then $P_1 \cup Q_2$ and $P_2 \cup Q_1$ are independent sets in $G^2$. That is, $V(G^2)\setminus (P_1 \cup Q_2)$ and $V(G^2)\setminus (P_2 \cup Q_1)$ are vertex covers of $G^2$. Conversely, if $I$ is an independent set in $G^2$, then the sets $P=\{v \in V(G) \mid v_1 \in I\}$ and $Q=\{v \in V(G) \mid v_2 \in I\}$ form a bipartition of $G - S$ where $S=\{v \in V(G) \mid v_1,v_2 \in V(G^2) \setminus I\}$ is an OCT of $G$. 
\end{cor}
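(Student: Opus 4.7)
The plan is to verify both directions by a routine case analysis driven by the three types of edges of $G^2$: the two copies of $E(G)$ living on $V_1$ and on $V_2$, together with the matching edges $\{v_1,v_2\}$ for $v\in V(G)$. Everything follows by matching each potential $G^2$-edge to the corresponding constraint on $S$, $P$, $Q$.

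For the forward direction, starting from a bipartition $P\uplus Q$ of $G-S$, I would check that no edge of $G^2$ has both endpoints in $P_1\cup Q_2$. If the two endpoints both lie in $P_1$ or both in $Q_2$, this follows from $P$ and $Q$ being independent in $G-S$ and hence in $G$. If one endpoint is $u_1\in P_1$ and the other is $v_2\in Q_2$, the only candidate $G^2$-edge between $V_1$ and $V_2$ is a matching edge, which would force $u=v\in P\cap Q=\emptyset$. The claim for $P_2\cup Q_1$ is entirely symmetric, and the vertex cover assertions are then obtained by passing to complements.

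For the converse, given an independent set $I$ of $G^2$, I would first argue that $\{P,Q,S\}$ partitions $V(G)$. The matching edge $\{v_1,v_2\}$ forbids both copies of any single $v$ from lying in $I$, giving $P\cap Q=\emptyset$; since $S$ is defined as the set of vertices whose copies both avoid $I$, the three sets cover $V(G)$ disjointly. To show $P\uplus Q$ is a bipartition of $G-S$, I would use the $V_1$-copy of $E(G)$: an edge $\{u,v\}\in E(G)$ with $u,v\in P$ would lift to $\{u_1,v_1\}\subseteq I$, contradicting independence. The same argument on the $V_2$-copy handles $Q$, after which $S$ is an \textsc{OCT} of $G$ by definition.

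I expect no real obstacle here beyond bookkeeping: the only content is recognizing that the three edge types of $G^2$ exhaust all cases and mapping each case to the right structural property of the bipartition. In particular, no argument about the \emph{sizes} of $S$ and $I$ is required, since Lemma~\ref{reduc} is not being reproved; the corollary only extracts the explicit correspondence between bipartitions of $G-S$ and maximum independent sets of $G^2$ that is implicit in its proof.
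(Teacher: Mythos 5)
Your case analysis is correct and complete: the paper states this corollary without proof (citing \cite{fpt-lp}), and your direct verification via the three edge types of $G^2$ (the two lifted copies of $E(G)$ and the matching edges $\{v_1,v_2\}$) is exactly the intended argument in both directions. The only slip is the closing phrase about ``maximum independent sets'' --- the corollary concerns arbitrary independent sets $I$, with no size claim --- but this remark plays no role in your proof and does not affect its validity.
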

\noindent {\bf The Compression Step: }Let $G$ be a graph on $n$ vertices and $S$ be an $k+1$ sized {\footnotesize OCT} of $G$ in the compression step. Let $T$ be a subset of $S$ that induces a bipartite graph. Let $H$ denote the subgraph of $G$ induced on $V(G) \setminus (S \setminus T)$. Let $|V(H)|=h$ and $B$ denote the set $V(H) \setminus T$. Note that $H[B]$ is bipartite since $V(H) \setminus T$ is $V(G) \setminus S$ and hence $T$ is an {\footnotesize OCT} of $H$. Now, the task is to determine if $H$ has an {\footnotesize OCT} of size at most $|T|-1$ that is disjoint from $T$. From Lemma \ref{reduc}, we have the following observation.
\begin{obs}
\label{main-thm}
$H$ has an {\footnotesize OCT} $T' \subseteq B$ (in other words, $T'$ is disjoint from $T$) of size $r$ if and only if $H^2$ has a vertex cover $X$ of size $h+r$ such that for each $v \in T$, either $v_1 \in X$ or $v_2 \in X$ but not both.
\end{obs}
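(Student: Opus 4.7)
The plan is to derive this as a direct application of Lemma~\ref{reduc} and Corollary~\ref{cor-reduc} to the graph $H$, with the extra bookkeeping needed to translate the condition $T' \cap T = \emptyset$ into a condition on the fibers $\{v_1,v_2\}$ of $H^2$ for $v \in T$. The key point is that for any $v \in V(H)$, exactly one of three things can happen in a vertex cover $X$ of $H^2$: both $v_1,v_2 \in X$, exactly one of them in $X$, or (by the edge $\{v_1,v_2\}$) none of them in $X$, which is impossible. So the vertices of $V(H)$ split into those contributing $2$ to $|X|$ and those contributing $1$, and the first group is exactly the OCT produced by Corollary~\ref{cor-reduc}.

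For the forward direction, I would start with an OCT $T' \subseteq B$ of $H$ of size $r$ and take any bipartition $P \uplus Q$ of $H - T'$. Corollary~\ref{cor-reduc} hands me the vertex cover $X = V(H^2) \setminus (P_1 \cup Q_2)$ of $H^2$, whose size is $2h - |P| - |Q| = 2h - (h - r) = h + r$. Since $T \cap T' = \emptyset$, every $v \in T$ lies in $P$ or in $Q$, so from the definitions of $P_1 \cup Q_2$ it follows that exactly one of $v_1, v_2$ is outside $X$, i.e.\ exactly one is in $X$.

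For the reverse direction, I would take a vertex cover $X$ of $H^2$ with $|X| = h + r$ satisfying the fiber condition and apply the converse part of Corollary~\ref{cor-reduc} to the independent set $I = V(H^2) \setminus X$. This yields an OCT $T' = \{v \in V(H) \mid v_1, v_2 \in X\}$ of $H$ together with a bipartition $P \uplus Q$ of $H - T'$. The fiber condition says that no $v \in T$ satisfies $v_1, v_2 \in X$, so $T' \cap T = \emptyset$, i.e.\ $T' \subseteq B$. Finally, counting contributions to $|X|$ as above gives $|X| = 2|T'| + (h - |T'|) = h + |T'|$, forcing $|T'| = r$.

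The only real obstacle is making sure the fiber condition is read off correctly; once that is in place the statement follows by direct substitution into Lemma~\ref{reduc} and Corollary~\ref{cor-reduc}, and no new graph-theoretic argument is required.
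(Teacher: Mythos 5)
Your proof is correct and follows exactly the route the paper intends: the observation is stated there as an immediate consequence of Lemma~\ref{reduc} and Corollary~\ref{cor-reduc}, and your argument simply spells out that derivation, including the counting of fibers contributing $1$ or $2$ to the cover and the translation of $T'\cap T=\emptyset$ into the ``exactly one of $v_1,v_2$'' condition. No gaps; nothing genuinely different from the paper's (implicit) proof.
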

\noindent We now describe an algorithm for {\footnotesize OCT-DISJOINT-COMPRESSION} based on Observation \ref{main-thm}.

\noindent \fbox{
  \parbox{16cm}{
{\bf Algorithm} {\em OCT-Disjoint-Compression$(H,T)$} {\tt{/* Disjoint Compression Step of {\footnotesize OCT} */}}\\
{\em Input:} A graph $H$ and an {\footnotesize OCT} $T$ of $H$\\
{\em Output:} An {\footnotesize OCT} $T'$ of $H$ of size at most $|T|-1$ such that $T \cap T'=\emptyset$ (if one exists)\\
{\bf (I) }Iterate over each set $Y \subset (T_1 \cup T_2)$ such that for each $v \in T$, either $v_1 \in Y$ or $v_2 \in Y$ but not both.\\
\indent \hspace{.5cm}{\tt{/* $2^{|T|}$ choices for $Y$ */}}\\
\indent \hspace{.5cm}{\bf 1. }If $Y$ is not a vertex cover of $H^2[T_1 \cup T_2]$, then skip to the next choice of $Y$.\\
\indent \hspace{.5cm}{\bf 2. }Define $W \subseteq (B_1 \cup B_2)$ as the set of vertices that are adjacent to some vertex in $(T_1 \cup T_2)\setminus Y$.\\
\indent \hspace{1cm}Obtain a minimum vertex cover $Z$ of $H^2[(B_1 \cup B_2) \setminus W]$.\\
\indent \hspace{1cm}Define $X$ as $Y \cup W \cup Z$. {\tt{/* $X$ is a min vertex cover of $H^2$ such that $X \cap (T_1 \cup T_2)=Y$ */}}\\
\indent \hspace{.5cm}{\bf 3. }Define $T'$ as the set $\{v \in V(H) \mid v_1,v_2 \in X\}$. If $|T'| \leq |T|-1$, then return $T'$. \\
{\bf (II) }Declare that $H$ has no {\footnotesize OCT} $T'$ of at most $|T|-1$ vertices that is disjoint from $T$.
}
}
\begin{theorem}
Algorithm OCT-Disjoint-Compression$(H,T)$ determines in $O^*(2^{|T|})$ time whether $H$ has an {\footnotesize OCT} $T'$ of size at most $|T|-1$ such that $T \cap T'=\emptyset$.
\end{theorem}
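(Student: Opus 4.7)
My plan is to invoke Observation \ref{main-thm} to recast the disjoint compression question as a search for a minimum vertex cover $X$ of $H^2$ in which, for each $v \in T$, exactly one of $v_1, v_2$ lies in $X$. The outer loop of Step (I) enumerates all $2^{|T|}$ choices for the intersection $Y = X \cap (T_1 \cup T_2)$, so it suffices to argue that, for each fixed $Y$, Steps 2--3 produce a minimum vertex cover of $H^2$ whose intersection with $T_1 \cup T_2$ is exactly $Y$, in polynomial time.

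The crucial structural step is to show that this constrained minimum vertex cover subproblem is polynomial-time solvable. Any such cover $X$ must contain $W$, since the edges from $(T_1 \cup T_2) \setminus Y$ into $B_1 \cup B_2$ have to be covered on the $B$-side. After forcing $W$ in, $X \cap ((B_1 \cup B_2) \setminus W)$ need only be a vertex cover of $H^2[(B_1 \cup B_2) \setminus W]$, and minimizing $|X|$ reduces to computing a minimum vertex cover of this induced subgraph. The key observation I would rely on is that $H^2[B_1 \cup B_2]$ is bipartite: writing the bipartition of $H[B]$ as $P \uplus Q$, the parts $P_1 \cup Q_2$ and $P_2 \cup Q_1$ witness bipartiteness, because each edge of $H[B]$ crosses $P$ and $Q$, while each matching edge $\{v_1, v_2\}$ switches the index. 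Hence $H^2[(B_1 \cup B_2) \setminus W]$ is bipartite and K\"onig's theorem yields $Z$ in polynomial time via maximum matching.

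Armed with this, I would establish correctness in both directions. For soundness, $X = Y \cup W \cup Z$ is always a vertex cover of $H^2$: edges inside $H^2[T_1 \cup T_2]$ are covered by $Y$ (thanks to the Step 1 check), edges between $(T_1 \cup T_2) \setminus Y$ and $B_1 \cup B_2$ are covered by $W$, and the remaining edges inside $H^2[(B_1 \cup B_2) \setminus W]$ are covered by $Z$; since $Y$ already satisfies the Observation \ref{main-thm} constraint, the set $T'$ extracted in Step 3 is an OCT of $H$ disjoint from $T$, and the explicit size check enforces $|T'| \le |T|-1$. For completeness, given a hypothetical disjoint OCT $T^*$ of size at most $|T|-1$, Observation \ref{main-thm} yields a vertex cover $X^*$ of $H^2$ of size $h + |T^*|$ satisfying the constraint; letting $Y^* = X^* \cap (T_1 \cup T_2)$, the iteration on $Y^*$ would construct $X$ with $|X| \le |X^*|$, producing a $T'$ with $|T'| \le |T^*| \le |T|-1$ that the algorithm outputs.

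The running time bound $O^*(2^{|T|})$ is then immediate: $2^{|T|}$ outer iterations, each doing only polynomial-time vertex-cover and bipartite-matching work. The main obstacle I anticipate is articulating the constrained vertex-cover step correctly and efficiently, which ultimately hinges on the bipartiteness of $H^2[B_1 \cup B_2]$ noted above.
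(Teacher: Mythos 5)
Your proposal is correct and follows essentially the same route as the paper's proof: enumerate $Y = X \cap (T_1 \cup T_2)$, observe that any constrained cover must contain $W$ and that the residual problem is a minimum vertex cover in the bipartite graph $H^2[(B_1\cup B_2)\setminus W]$, and appeal to Observation \ref{main-thm} (with Corollary \ref{cor-reduc}) in both directions, giving the $O^*(2^{|T|})$ bound. Your write-up is in fact slightly more explicit than the paper's on why $Y\cup W\cup Z$ is a cover and why it is optimal among covers meeting $T_1\cup T_2$ exactly in $Y$, but the argument is the same.
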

\begin{proof}
Let $T'$ be an {\footnotesize OCT} of size $t$ ($t \leq |T|-1$) in $H$ such that $T \cap T'=\emptyset$. By Observation \ref{main-thm} and Corollary \ref{cor-reduc}, $H^2$ has a vertex cover $X'$ of size $h+t$ such that for each vertex $v$ in $T'$, both $v_1$ and $v_2$ are in $X'$ and for each $v \in V(H)\setminus T'$, either $v_1 \in X'$ or $v_2 \in X'$ but not both. As $T \cap T'=\emptyset$, it follows that $T \subseteq V(H)\setminus T'$. Thus, $X'$ is a vertex cover of $H^2$ such that for each $v \in T$, either $v_1 \in X'$ or $v_2 \in X'$ but not both. Let $Y=X' \cap (T_1 \cup T_2)$. The enumeration of $2^{|T|}$ sets in step (I) of the algorithm will enumerate $Y$. Hence, the sets $W$ and $Z$ in step 2 of the algorithm together with $Y$ will lead to the discovery of a vertex cover $X$ of $H^2$ that is not larger than $X'$. Moreover, $X$ satisfies the property that for each $v \in T$, either $v_1 \in X$ or $v_2 \in X$ but not both. Thus, $\{v \in V(H) \mid v_1,v_2 \in X\}$ is an OCT of $H$ that is disjoint from $T$ and not larger than $T'$. On the other hand, if $H$ has no {\footnotesize OCT} of size at most $|T|-1$ that is disjoint from $T$, then by Observation \ref{main-thm}, every vertex cover $X$ in $H^2$ satisfying the property that for each $v \in T$, either $v_1 \in X$ or $v_2 \in X$ but not both, is of size at least $h+|T|$. Thus, the algorithm will exit from step (II) reporting the failure of the disjoint compression step.\\
For a fixed choice of $Y$, the subsequent steps of the algorithm require only a polynomial-time effort because $W$ is uniquely determined and $Z$ is obtained in polynomial time \cite{west} since the subgraph $H^2[B_1 \cup B_2]$ is bipartite (as $H[B]$ is bipartite). Therefore, the runtime of the algorithm is bounded by $O^*(2^{|T|})$.
\end{proof}
\noindent If OCT-Disjoint-Compression$(H,T)$ returns an {\footnotesize OCT} $T'$, then $T' \cup (S \setminus T)$ is an $k$-sized {\footnotesize OCT} of $G$. Otherwise, we proceed to the next choice of $T$. Therefore, by trying all possible subsets of $S$, we determine if there exists an {\footnotesize OCT} of size at most $k$ in $G$. If no subset of $S$ yields the required {\footnotesize OCT} then we declare that $G$ has no {\footnotesize OCT} of $k$ vertices. Thus, we obtain an algorithm for {\footnotesize OCT} with $\sum_{i=1}^{k+1} {k+1 \choose i}2^{i}=O^*(3^{k})$ runtime.
\bibliographystyle{alpha}
\bibliography{oct-ref}
\end{document}